\documentclass[conference,letterpaper]{IEEEtran}
\IEEEoverridecommandlockouts
\usepackage{cite}
\usepackage{url}
\usepackage{ifthen}
\usepackage{algorithm, algorithmicx, algpseudocode}
\usepackage{graphicx} 
\usepackage{textcomp}
\usepackage{booktabs}
\usepackage{makecell} 
\usepackage[cmex10]{amsmath}
\usepackage{amssymb,amsfonts}
\usepackage[left=1.62cm,right=1.62cm,top=1.85cm]{geometry}
\usepackage{pifont}
\usepackage{amsthm}
\usepackage{mathrsfs}
\def\BibTeX{{\rm B\kern-.05em{\sc i\kern-.025em b}\kern-.08em T\kern-.1667em\lower.7ex\hbox{E}\kern-.125emX}}
\usepackage{tikz}
\usetikzlibrary{automata,positioning,calc}
\usetikzlibrary{intersections}
\usetikzlibrary{decorations.pathreplacing,angles,quotes}

\usepackage{bm}
\usepackage{amscd}

\algnewcommand{\Initialize}[1]{%
  \State \textbf{Initialization:}
  \Statex \hspace*{\algorithmicindent}\parbox[t]{0.8\linewidth}{\raggedright #1}
}
\makeatletter

\theoremstyle{definition}
\newtheorem{theorem}{Theorem}
\newtheorem{definition}{Definition}
\newtheorem{lemma}{Lemma}
\newtheorem{prop}{Proposition}
\newtheorem{cor}{Corollary}
\newtheorem{remark}{Remark}
\newtheorem{eg}{Example}

\newcommand{\norm}[1]{\left\lVert#1\right\rVert}

\newcommand{\argmin}{\operatornamewithlimits{argmin}}

\def\vE{\mathbb E}

\font\b=cmr10 scaled\magstep4

\def\bigzerou{\smash{\lower1.7ex\hbox{\b 0}}}
\def\bigzerou{\smash{\lower1.7ex\hbox{\b 0}}}

\begin{document}
\title{UMVUE-Type Estimators under Bregman Losses
\thanks{This work was supported by JSPS KAKENHI Grant Number 26K14712 and JST CRONOS Grant Number JPMJCS25N5.}
}

\author{
\IEEEauthorblockN{Akira Kamatsuka}
\IEEEauthorblockA{Shonan Institute of Technology \\ 
Email: \text{kamatsuka@info.shonan-it.ac.jp}
 }
\and
\IEEEauthorblockN{Shun Watanabe}
\IEEEauthorblockA{Tokyo University of Agriculture and Technology \\ 
Email: \text{shunwata@cc.tuat.ac.jp}
 } 
}
\maketitle

\begin{abstract}
We study unbiased estimation under Bregman losses and develop an extension of the classical theory of uniformly minimum variance unbiased estimators (UMVUEs). 
Exploiting bias--variance-type decompositions for Bregman divergences, we consider two natural loss functions, $D_{\varphi}(\theta,\hat{\theta})$ and $D_{\varphi}(\hat{\theta},\theta)$, and their corresponding notions of unbiasedness.
We show that the latter formulation reduces to the classical setting, whereas the former yields a different framework in which unbiasedness is characterized in the dual space induced by $\nabla\varphi$. 
For the nontrivial case, we establish analogs of the Rao--Blackwell and Lehmann--Scheff{\'e} theorems, providing a systematic construction of type-I Bregman UMVUEs. 
\end{abstract}

\section{Introduction} \label{sec:intro}

Consider the problem of estimating an unknown parameter $\theta$ from a random sample
$X^{n} = (X_{1},\dots,X_{n}) \overset{\mathrm{i.i.d.}}{\sim} p_{X\mid\theta}$.
In classical mathematical statistics, extensive attention has been devoted to the study of unbiased estimation and the construction of optimal estimators.
Fundamental results include the Cram{\' e}r--Rao lower bound \cite{Rao1950} and the Rao--Blackwell theorem \cite{10.1214/aoms/1177730497}.

In the framework of statistical decision theory established by Wald \cite{10.1214/aoms/1177730030},
an estimator $\hat{\theta}=\delta(X^{n})$ is evaluated via a loss function $\ell(\theta, \hat{\theta})$ and the corresponding risk function
$R(\theta,\delta):=\mathbb{E}_{\theta}[\ell(\theta,\delta(X^{n}))]$. 
For the point estimation of a scalar parameter, a commonly used loss is the squared-error loss
$\ell_{\mathrm{sq}}(\theta,\hat{\theta}) := (\theta-\hat{\theta})^{2}$, 
under which the risk admits the well-known bias--variance decomposition: 
\begin{align}
R(\theta,\delta)
&=
\Bigl(\mathbb{E}_{\theta}[\delta(X^{n})]-\theta\Bigr)^{2}
+
\mathrm{Var}_{\theta}(\delta(X^{n})).
\label{eq:BV_decomposition}
\end{align}
In this setting, unbiased estimators with the smallest variance among all unbiased estimators are known as uniformly minimum variance unbiased estimators (UMVUEs). The Lehmann--Scheff{\'e} theorem \cite{8539305f-dc37-3fe8-8b38-6d3a9badc255,577792d0-252f-3c76-bfd7-a52925d34054} provides a sufficient condition for an estimator to be a UMVUE.

A natural direction for extending the classical theory is to consider more general estimation settings, including loss functions beyond the squared-error loss and the estimation of vector-valued parameters.
Among such extensions, Bregman divergences \cite{BREGMAN1967200} form a broad and well-structured class that includes the squared-error loss as a special case.
Bregman losses have also been investigated in statistical estimation. For example, Dytso \emph{et al.} \cite{9625998} derived a Cram{\'e}r--Rao-type lower bound for Bayes risk under Bregman loss and investigated linear estimation. 
Banerjee \textit{et al.} \cite{1459065} showed that the conditional expectation is optimal for prediction under Bregman losses.
Extensions of the bias--variance decomposition to general loss functions, including cross-entropy and margin-based losses, have also been studied \cite{6790909,pmlr-v151-wood22a}.
Recently, bias--variance-type decompositions tailored to Bregman divergences have been derived \cite{pfau2025generalizedbiasvariancedecompositionbregman,gupta2022ensembles,heskes2026biasvariancedecompositionsexclusiveprivilege}.
In a related line of work, Das \emph{et al.} \cite{10988591} used the bias--variance-type decomposition to extend robustness results for interpolation to Bregman losses.

Despite these developments, previous studies have primarily focused on risk decomposition, prediction, or related applications, rather than on the development of an unbiased estimation theory analogous to the classical UMVUE framework.
Particularly, once such a decomposition is available, some natural questions are whether the vanishing of the corresponding bias term induces a meaningful notion of unbiasedness and whether an analog of the UMVUE can be characterized via Rao--Blackwell and Lehmann--Scheff{\'e}-type arguments.

In the Bregman loss setting, these questions depend on how the bias term is defined from the decomposition.
Indeed, the two losses
\begin{align}
\ell_{\varphi}(\theta, \hat{\theta}) &:= D_{\varphi}(\theta,\hat{\theta}), \\ 
\tilde{\ell}_{\varphi}(\theta, \hat{\theta}) &:= D_{\varphi}(\hat{\theta},\theta)
\end{align}
yield different bias-variance-type decompositions and hence different notions of unbiasedness.
As we show, the formulation based on $\tilde{\ell}_{\varphi}(\theta,\hat{\theta})$ reduces to the classical setting, whereas the formulation based on $\ell_{\varphi}(\theta,\hat{\theta})$ yields a nontrivial extension in which the bias term is no longer expressed directly as the estimator's expectation but instead as the expectation of a transformed estimator induced by the mapping $\nabla \varphi$.

Motivated by this observation, we develop a theory of UMVUEs under Bregman losses, which we call \emph{Bregman UMVUEs}.
The major contributions of this study are summarized as follows:
\begin{itemize}
\item We show that the formulation based on $\tilde{\ell}_{\varphi}(\theta,\hat{\theta})$ reduces to the classical unbiased estimation framework: the induced notion of unbiasedness coincides with the standard definition, and the corresponding optimal estimator is obtained.

\item Meanwhile, we show that the formulation based on $\ell_{\varphi}(\theta,\hat{\theta})$ yields a new notion of unbiasedness characterized by the expectation of a transformed estimator under the mapping $\nabla\varphi$ (Definition \ref{def:Bregman_unbiasedness} and Proposition \ref{prop:characterization_Bregman_unbiasedness}).

\item We derive extensions of the Rao--Blackwell and Lehmann--Scheff{\'e} theorems for this nontrivial setting, establishing a principled construction of the Bregman UMVUE via conditioning on sufficient statistics and its uniqueness under completeness conditions (Theorems \ref{thm:Bregman_Rao_Blackwell} and \ref{thm:Bregman_Lehmann_Scheffe}).
\end{itemize}

\section{Preliminaries}\label{sec:preliminaries}

Let $\mathcal{P}=\{p_{X\mid \theta}\}_{\theta\in \Theta}$ be a statistical model parameterized by $\theta\in \Theta \subseteq \mathbb{R}^{d}$. 
Let $X^{n}=(X_{1}, \dots, X_{n})\overset{\mathrm{i.i.d.}}{\sim} p_{X\mid \theta}$ be a random sample from $p_{X\mid \theta}$. 
Let $\delta(X^{n})$ be an estimator of $\theta$ based on $X^{n}$. 
Let $\vE_{\theta}[\cdot]$ denote expectation under $p_{X\mid \theta}$. 
For a scalar random variable $X$, we define 
$\mathrm{Var}_{\theta}(X):=\vE_{\theta}\bigl[(X - \vE_{\theta}[X])^{2}\bigr]$.
In this section, we review the theory of unbiased estimators and Bregman divergences.

\subsection{Theory of Unbiased Estimators}

\begin{definition}[Unbiased estimator {\cite[Chapter 2, Def.~1.1]{LC}}] \label{def:unbiased}
An estimator $\delta(X^{n})$ of $\theta$ is \textit{unbiased} if for all $\theta\in \Theta$, 
\begin{align}
\vE_{\theta}[\delta(X^{n})] = \theta.
\end{align}
\end{definition}

\begin{definition}[Sufficient statistic {\cite[p.~32]{LC}}] \label{def:sufficient_statistic}
A statistic $T:=t(X^{n})$ is \textit{sufficient} for $\mathcal{P}$ if the conditional distribution of $X^{n}$ given $T=t$ is independent of $\theta$.
\end{definition}

\begin{prop}[Rao--Blackwell Theorem {\cite[Chapter 1, Thm.~7.8]{LC}}] \label{prop:RB_thm}

Let $T$ be a sufficient statistic for $\mathcal{P}$. 
Let $\delta(X^{n})$ be an estimator with finite expectation and $\ell(\theta, \hat{\theta})$ be convex in $\hat{\theta}$. 
Define 
\begin{align}
\tilde{\delta}(T):=\vE[\delta(X^{n})\mid T]. \label{eq:RB}
\end{align}
Then,
\begin{align}
R(\theta, \tilde{\delta}) \leq R(\theta, \delta).
\end{align}
Particularly, for the point estimation of a scalar parameter $\theta$ under the squared-error loss,
if $\delta(X^{n})$ is unbiased, then $\tilde{\delta}(T)$ is unbiased and
\begin{align}
\mathrm{Var}_{\theta}(\tilde{\delta}(T)) \leq \mathrm{Var}_{\theta}(\delta(X^{n})).
\end{align}
\end{prop}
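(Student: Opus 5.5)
The plan is to prove the general risk inequality first, using conditional Jensen, and then read off the squared-error statement.

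\emph{Well-definedness.} Because $T$ is sufficient (Definition~\ref{def:sufficient_statistic}), the conditional distribution of $X^{n}$ given $T$ does not depend on $\theta$. Hence $\tilde{\delta}(T)=\vE[\delta(X^{n})\mid T]$ can be computed without knowing $\theta$, so it is a genuine estimator. The assumption that $\delta(X^{n})$ has finite expectation guarantees that this conditional expectation exists.

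\emph{Risk inequality.} Fix $\theta$. Since $\hat{\theta}\mapsto \ell(\theta,\hat{\theta})$ is convex, conditional Jensen's inequality gives
\begin{align}
\ell\bigl(\theta,\tilde{\delta}(T)\bigr)=\ell\bigl(\theta,\vE[\delta(X^{n})\mid T]\bigr)\le \vE\bigl[\ell(\theta,\delta(X^{n}))\mid T\bigr]
\end{align}
almost surely under $p_{X\mid\theta}$. Taking $\vE_{\theta}$ of both sides and applying the tower property yields $R(\theta,\tilde{\delta})\le R(\theta,\delta)$.

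\emph{Squared-error case.} The tower property also gives $\vE_{\theta}[\tilde{\delta}(T)]=\vE_{\theta}[\delta(X^{n})]=\theta$, so $\tilde{\delta}(T)$ is unbiased. For an unbiased estimator the decomposition \eqref{eq:BV_decomposition} has zero bias term, so the risk equals the variance. The risk inequality with $\ell_{\mathrm{sq}}$, which is convex in $\hat\theta$, then reads $\mathrm{Var}_{\theta}(\tilde{\delta}(T))\le\mathrm{Var}_{\theta}(\delta(X^{n}))$.

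Alternatively, the law of total variance gives this directly:
\begin{align}
\mathrm{Var}_{\theta}(\delta)=\mathrm{Var}_{\theta}(\tilde{\delta})+\vE_{\theta}[\mathrm{Var}(\delta\mid T)],
\end{align}
and the second term is nonnegative.

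\emph{Main obstacle.} The only delicate point is justifying conditional Jensen in the vector-valued, extended-real setting. A convex function is a supremum of countably many affine minorants; I would apply linearity of conditional expectation to each minorant and then take the supremum. If $R(\theta,\delta)=\infty$ the inequality holds trivially, so integrability of $\ell$ is only needed in the finite case.
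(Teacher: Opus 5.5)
Your proposal is correct and is the standard argument. The paper does not prove this cited result itself, but the Lehmann--Casella proof it refers to (and the paper's own proof of Theorem~\ref{thm:Bregman_Rao_Blackwell}) is exactly conditional Jensen for $\hat{\theta}\mapsto\ell(\theta,\hat{\theta})$ followed by the tower property, with unbiasedness and the variance bound then read off for squared-error loss; one minor precision is that writing a convex function as a countable supremum of affine minorants requires lower semicontinuity (automatic for a convex function finite on an open convex set), or you can instead apply ordinary Jensen to the $\theta$-free conditional distribution of $X^{n}$ given $T=t$, as Lehmann--Casella do.
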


\begin{remark}
Eq.~\eqref{eq:RB} is referred to as \textit{Rao--Blackwellization} and $\tilde{\delta}(T)$ is referred to as \textit{Rao--Blackwell estimator}.
\end{remark}

\begin{definition}[UMVUE {\cite[Def.~1.6]{LC}}] \label{def:UMVUE}
For the point estimation of a scalar parameter $\theta$, an unbiased estimator $\delta(X^{n})$ is the UMVUE of $\theta$ if for all $\theta\in \Theta$ and for all unbiased estimators $\delta^{\prime}(X^{n})$,
\begin{align}
\mathrm{Var}_{\theta}(\delta(X^{n})) \leq \mathrm{Var}_{\theta}(\delta^{\prime}(X^{n})).
\end{align}
\end{definition}

\begin{definition}[Complete sufficient statistic {\cite[p.~42]{LC}}] \label{def:complete_sufficient_statistic}
A sufficient statistic $T$ is \textit{complete} if 
\begin{align}
\vE_{\theta}[h(T)] = 0 \ \text{for all } \theta\in \Theta
\;\;\Rightarrow\;\;
h(T)=0 \quad \text{a.e. } \mathcal{P}, 
\end{align}
where a.e. $\mathcal{P}$ is an abbreviation for almost everywhere with respect to every $P\in \mathcal{P}$.
\end{definition}

\begin{lemma}[{\cite[Chapter 2, Lemma~1.10]{LC}}] \label{lem:uniqueness}
Let $T$ be a complete sufficient statistic for $\mathcal{P}$. Let $\tilde{\delta}_{1}(T)$ and $\tilde{\delta}_{2}(T)$ be two unbiased estimators of $\theta$.
Then,
\begin{align}
\tilde{\delta}_{1}(T) = \tilde{\delta}_{2}(T), \quad \text{a.e. } \mathcal{P}.
\end{align}
\end{lemma}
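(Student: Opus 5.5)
The plan is to apply the completeness property of Definition~\ref{def:complete_sufficient_statistic} to the difference of the two estimators. Set
\begin{align}
h(T) := \tilde{\delta}_{1}(T) - \tilde{\delta}_{2}(T).
\end{align}
Both $\tilde{\delta}_{1}(T)$ and $\tilde{\delta}_{2}(T)$ are unbiased in the sense of Definition~\ref{def:unbiased}, so both have finite expectation. Linearity of expectation then gives, for every $\theta\in\Theta$,
\begin{align}
\vE_{\theta}[h(T)] = \vE_{\theta}[\tilde{\delta}_{1}(T)] - \vE_{\theta}[\tilde{\delta}_{2}(T)] = \theta - \theta = 0 .
\end{align}

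Because $h$ is a fixed measurable function of $T$ that does not depend on $\theta$, completeness of $T$ applies directly. It yields $h(T)=0$ a.e. $\mathcal{P}$, which is exactly $\tilde{\delta}_{1}(T)=\tilde{\delta}_{2}(T)$ a.e. $\mathcal{P}$.

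If $\theta$ is vector-valued ($d>1$), I will run the same argument coordinatewise. For each $j=1,\dots,d$, take $h_j(T)$ to be the $j$-th component of $\tilde{\delta}_{1}(T)-\tilde{\delta}_{2}(T)$. This is a scalar function of $T$ with $\vE_{\theta}[h_j(T)]=0$ for all $\theta$. Completeness gives $h_j(T)=0$ off a null set $N_j$ for each $P\in\mathcal{P}$. The finite union $\bigcup_j N_j$ is still null, so the vectors agree a.e. $\mathcal{P}$.

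There is no real obstacle here. The only point needing care is that $h$ must not depend on $\theta$. This holds because $\tilde{\delta}_{1}$ and $\tilde{\delta}_{2}$ are estimators, i.e., statistics. Sufficiency of $T$ is not used beyond being part of the hypothesis of Definition~\ref{def:complete_sufficient_statistic}.
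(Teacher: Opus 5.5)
Your proof is correct and is essentially the standard argument behind the cited Lehmann--Casella result (the paper gives only the citation, no proof of its own): completeness applied to $h(T)=\tilde{\delta}_{1}(T)-\tilde{\delta}_{2}(T)$, whose expectation vanishes for every $\theta$. Your coordinatewise treatment of the vector case is a useful addition, since the paper later applies this lemma to the vector-valued dual estimators $\tilde{\eta}_{i}(T)$ in Lemma~\ref{lem:extended_uniqueness}.
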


\begin{prop}[Lehmann--Scheff{\' e} Theorem {\cite[Chapter 2, Thm.~1.11]{LC}}] \label{prop:LS_thm}
Let $T$ be a complete sufficient statistic for $\mathcal{P}$, and let $\ell(\theta, \hat{\theta})$ be convex in $\hat{\theta}$. 
If $\delta(X^{n})$ is an unbiased estimator of $\theta$, then the Rao--Blackwell estimator $\tilde{\delta}(T):=\vE[\delta(X^{n}) | T]$ uniformly minimizes the risk, 
i.e., for all $\theta\in \Theta$ and for all unbiased estimators $\delta^{\prime}(X^{n})$, 
\begin{align}
R(\theta, \tilde{\delta})\leq R(\theta, \delta^{\prime}).
\end{align}
For the point estimation of a scalar parameter with squared-error loss, $\tilde{\delta}(T)$ is the unique UMVUE of $\theta$.
\end{prop}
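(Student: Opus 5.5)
The plan is to combine Lemma~\ref{lem:uniqueness} with the Rao--Blackwell theorem (Proposition~\ref{prop:RB_thm}). Fix an arbitrary unbiased estimator $\delta^{\prime}(X^{n})$ and form its Rao--Blackwell estimator
\begin{align}
\tilde{\delta}^{\prime}(T):=\vE[\delta^{\prime}(X^{n})\mid T].
\end{align}
Because $T$ is sufficient, this conditional expectation does not depend on $\theta$, so $\tilde{\delta}^{\prime}(T)$ is a genuine estimator. By the tower property,
\begin{align}
\vE_{\theta}[\tilde{\delta}^{\prime}(T)]=\vE_{\theta}[\delta^{\prime}(X^{n})]=\theta
\end{align}
for all $\theta$, so $\tilde{\delta}^{\prime}(T)$ is unbiased. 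The same argument shows that $\tilde{\delta}(T)=\vE[\delta(X^{n})\mid T]$ is unbiased.

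Both $\tilde{\delta}(T)$ and $\tilde{\delta}^{\prime}(T)$ are unbiased functions of the complete sufficient statistic $T$. Lemma~\ref{lem:uniqueness}, applied componentwise to $h(T)=\tilde{\delta}(T)-\tilde{\delta}^{\prime}(T)$ (whose expectation vanishes for every $\theta$), gives $\tilde{\delta}(T)=\tilde{\delta}^{\prime}(T)$ a.e.\ $\mathcal{P}$. Hence $R(\theta,\tilde{\delta})=R(\theta,\tilde{\delta}^{\prime})$ for all $\theta$.

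Since $\ell(\theta,\cdot)$ is convex, Proposition~\ref{prop:RB_thm} gives $R(\theta,\tilde{\delta}^{\prime})\le R(\theta,\delta^{\prime})$. Chaining this with the previous equality yields $R(\theta,\tilde{\delta})\le R(\theta,\delta^{\prime})$ for every $\theta\in\Theta$ and every unbiased $\delta^{\prime}$, which is the first claim. Throughout we assume $\delta^{\prime}$ has finite expectation; this is implicit in unbiasedness.

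For the scalar squared-error case, unbiasedness makes the risk equal to the variance by \eqref{eq:BV_decomposition}. The first claim then shows that $\tilde{\delta}(T)$ is a UMVUE in the sense of Definition~\ref{def:UMVUE}.

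Uniqueness is the step that needs care, because the first claim only gives optimality. Suppose $\delta^{\ast}$ is another UMVUE, so that $\mathrm{Var}_{\theta}(\delta^{\ast})=\mathrm{Var}_{\theta}(\tilde{\delta})$ for all $\theta$. I would first apply the strict form of Rao--Blackwell. By the law of total variance,
\begin{align}
\mathrm{Var}_{\theta}(\delta^{\ast})=\mathrm{Var}_{\theta}(\vE[\delta^{\ast}\mid T])+\vE_{\theta}[\mathrm{Var}(\delta^{\ast}\mid T)].
\end{align}
Moreover $\vE[\delta^{\ast}\mid T]=\tilde{\delta}$ a.e.\ by the completeness argument above, so its variance equals $\mathrm{Var}_{\theta}(\delta^{\ast})$. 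It follows that $\vE_{\theta}[\mathrm{Var}(\delta^{\ast}\mid T)]=0$, hence $\delta^{\ast}=\vE[\delta^{\ast}\mid T]=\tilde{\delta}(T)$ a.s.\ under every $\theta$.

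If this route causes trouble, an alternative is to note that the average $(\delta^{\ast}+\tilde{\delta})/2$ is unbiased. Strict convexity of the variance then forces $\mathrm{Cov}_{\theta}(\delta^{\ast}-\tilde{\delta},\,\cdot)$ to vanish, and hence $\delta^{\ast}=\tilde{\delta}$ a.s.
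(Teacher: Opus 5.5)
Your proposal is correct and is essentially the standard proof behind this result. The paper gives no proof of its own and defers to Lehmann--Casella, whose argument you reproduce: Rao--Blackwellize an arbitrary unbiased $\delta^{\prime}$, identify $\vE[\delta^{\prime}(X^{n})\mid T]$ with $\tilde{\delta}(T)$ via Lemma~\ref{lem:uniqueness}, and get uniqueness from the equality case of Rao--Blackwell, which your law-of-total-variance computation makes explicit. The one implicit assumption you should state is $\mathrm{Var}_{\theta}(\tilde{\delta}(T))<\infty$ for every $\theta$, which you need in order to cancel the variances in the uniqueness step; this is the finite-risk proviso in Lehmann--Casella's version of the theorem.
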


\begin{cor} \label{cor:LS_thm}
Let $T$ be a complete sufficient statistic for $\mathcal{P}$.
For the point estimation of a scalar parameter with squared-error loss, any unbiased estimator that is a function of $T$ is the unique UMVUE.
\end{cor}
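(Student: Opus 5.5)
Let $\delta(T)$ be an unbiased estimator of $\theta$ that is a function of the complete sufficient statistic $T$. Since $\delta$ already depends on $X^{n}$ only through $T$, Rao--Blackwellization leaves it unchanged:
\begin{align}
\vE[\delta(T)\mid T]=\delta(T).
\end{align}
Hence $\delta(T)$ is its own Rao--Blackwell estimator. By Proposition~\ref{prop:LS_thm}, applied with the squared-error loss (which is convex in $\hat{\theta}$) and with $\delta(T)$ playing the role of the unbiased starting estimator, $\delta(T)$ is a UMVUE. Concretely, for every unbiased $\delta'(X^{n})$ and every $\theta\in\Theta$ we have $\mathrm{Var}_{\theta}(\delta(T))\le \mathrm{Var}_{\theta}(\delta'(X^{n}))$.

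For uniqueness I would argue directly, so as not to rely on how ``unique'' is phrased in Proposition~\ref{prop:LS_thm}. Let $\delta^{*}(X^{n})$ be any UMVUE. I would first Rao--Blackwellize it to $\tilde{\delta}^{*}(T):=\vE[\delta^{*}(X^{n})\mid T]$. By Proposition~\ref{prop:RB_thm}, $\tilde{\delta}^{*}(T)$ is unbiased and $\mathrm{Var}_{\theta}(\tilde{\delta}^{*})\le\mathrm{Var}_{\theta}(\delta^{*})$. Now both $\tilde{\delta}^{*}(T)$ and $\delta(T)$ are unbiased functions of $T$, so Lemma~\ref{lem:uniqueness} gives $\tilde{\delta}^{*}(T)=\delta(T)$ a.e.\ $\mathcal{P}$.

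It then remains to show $\delta^{*}=\tilde{\delta}^{*}$ a.e., and this is the step requiring care. By the law of total variance,
\begin{align}
\mathrm{Var}_{\theta}(\delta^{*})=\mathrm{Var}_{\theta}(\tilde{\delta}^{*})+\vE_{\theta}\bigl[\mathrm{Var}(\delta^{*}\mid T)\bigr].
\end{align}
Since $\delta^{*}$ is a UMVUE and $\tilde{\delta}^{*}$ is unbiased, $\mathrm{Var}_{\theta}(\delta^{*})\le\mathrm{Var}_{\theta}(\tilde{\delta}^{*})$. Combined with the Rao--Blackwell inequality above, this forces equality, so $\vE_{\theta}[\mathrm{Var}(\delta^{*}\mid T)]=0$ for every $\theta$. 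Therefore $\delta^{*}=\vE[\delta^{*}\mid T]=\delta(T)$ almost surely under each $p_{X\mid\theta}$, which proves uniqueness. Finite variance of the unbiased competitors is implicitly assumed throughout, as in Definition~\ref{def:UMVUE}; if a UMVUE exists with finite variance, all the steps are valid.
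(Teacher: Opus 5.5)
Your proposal is correct and matches the paper's implicit argument: the corollary follows from Proposition~\ref{prop:LS_thm} once you note that $\vE[\delta(T)\mid T]=\delta(T)$, so $\delta(T)$ is its own Rao--Blackwell estimator. Your separate uniqueness argument via Lemma~\ref{lem:uniqueness} and the law of total variance is valid given finite variances, though it only re-derives the uniqueness clause already stated in Proposition~\ref{prop:LS_thm}.
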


\subsection{Bregman Divergence and Bias--Variance Decomposition} \label{ssec:Bregman_divergence}

\begin{definition}[Bregman divergence {\cite{BREGMAN1967200}}]
Let $\Omega\subseteq \mathbb{R}^{d}$ be a convex set and $\varphi:\Omega\to\mathbb{R}$ be strictly convex and differentiable.
The \textit{Bregman divergence} is defined by
\begin{align}
D_{\varphi}(x,y)
:=
\varphi(x)-\varphi(y)-\langle \nabla \varphi(y), x-y \rangle.
\end{align}
\end{definition}

\begin{eg}
Examples of Bregman divergences include:
\begin{itemize}
\item If $\varphi(x) = \frac{1}{2}\norm{x}_{2}^{2}$, then $D_{\varphi}(x, y) = \frac{1}{2}\norm{x-y}_{2}^{2}$ (half the squared Euclidean distance). 
\item If $\varphi(x) = \frac{1}{2}x^{\top}Ax$, then $D_{\varphi}(x, y)=\frac{1}{2}(x-y)^{\top}A(x-y)$ (Mahalanobis distance). 
\item If $\varphi(x) = \sum_{i=1}^{d} ( x_{i}\log x_{i} - x_{i} )$, then $D_{\varphi}(x, y) = \sum_{i=1}^{d} \left( x_{i}\log \frac{x_{i}}{y_{i}} - x_{i} + y_{i} \right)$ (generalized KL divergence). 
\item If $\varphi(x) = -\sum_{i=1}^{d}\log x_{i}$ on $\mathbb{R}_{++}^{d}$, then $D_{\varphi}(x, y) = \sum_{i=1}^{d} \left( \frac{x_{i}}{y_{i}} - \log \frac{x_{i}}{y_{i}} - 1 \right)$ (Itakura--Saito divergence \cite{1570854175842518528}). 
\end{itemize}
\end{eg}

\begin{remark}
From this point onward, we assume that $\varphi$ is closed, proper, and strictly convex. 
Under this assumption, $\varphi$ is a Legendre-type function \cite[Section~26]{rockafellar-1970a}, \cite[Section~2.3]{bauschke1997legendre}, 
which ensures that the gradient mapping $\nabla \varphi$ is bijective and satisfies 
\begin{align}
\nabla \varphi^{*} = (\nabla \varphi)^{-1}, \label{eq:nabla_formula}
\end{align}
where $\varphi^{*}$ denotes the convex conjugate of $\varphi$, defined by
\begin{align}
\varphi^{*}(x^{*}) := \sup_{x\in \Omega} \left\{ \langle x^{*}, x\rangle - \varphi(x) \right\}.
\end{align}
In the context of the mirror descent method, the map $\nabla \varphi$ (or $\varphi$ itself) is often referred to as a \textit{mirror map}.
\end{remark}

\begin{prop}[\text{\cite[Sections 2.1 and 2.2]{10.5555/1283383.1283463}}] \label{prop:basic_property_Bregman_divergence}
The following hold:
\begin{enumerate}
\item $D_{\varphi}(x, y)\geq 0$, with $D_{\varphi}(x, y) = 0$ if and only if $x=y$.
\item $x\mapsto D_{\varphi}(x, y)$ is strictly convex.
\item $D_{\varphi}(x, y) = D_{\varphi^{*}}(\nabla \varphi(y), \nabla \varphi(x))$.
\end{enumerate}
\end{prop}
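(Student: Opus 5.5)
The plan is to verify the three items directly from the definition of $D_{\varphi}$, using the strict convexity of $\varphi$ and the Legendre-type identity \eqref{eq:nabla_formula}. The only delicate point is the duality in item~3.

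For item~1, I will use the first-order characterization of strict convexity. Since $\varphi$ is differentiable and strictly convex on the convex set $\Omega$, we have $\varphi(x) \ge \varphi(y) + \langle \nabla\varphi(y), x-y\rangle$ for all $x,y$, with strict inequality whenever $x\neq y$. To prove strictness, I restrict $\varphi$ to the segment joining $y$ and $x$ and argue there. Concretely, if equality held for some $x\neq y$, then for the midpoint $z=(x+y)/2$ strict convexity would give $\varphi(z) < \tfrac12\varphi(x)+\tfrac12\varphi(y)$. Combining this with the gradient inequality at $y$ applied to $z$ yields a contradiction. Rearranging the strict inequality gives $D_\varphi(x,y)>0$ for $x\neq y$, and $D_\varphi(y,y)=0$ holds trivially.

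For item~2, I fix $y$. Then $x\mapsto D_\varphi(x,y)$ equals $\varphi(x)$ plus an affine function of $x$, namely $-\varphi(y)-\langle\nabla\varphi(y),x-y\rangle$. Adding an affine function preserves strict convexity, so the claim follows immediately.

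For item~3, I set $x^*=\nabla\varphi(x)$ and $y^*=\nabla\varphi(y)$. By the Fenchel--Young equality for differentiable convex functions, $\varphi^*(x^*)=\langle x^*,x\rangle-\varphi(x)$, since the supremum defining $\varphi^*$ is attained at $x$ where the gradient of the concave objective vanishes. The same holds for $y^*$. By \eqref{eq:nabla_formula}, $\nabla\varphi^*(x^*)=x$. I then expand
\begin{align}
D_{\varphi^*}(y^*,x^*) = \varphi^*(y^*)-\varphi^*(x^*)-\langle x, y^*-x^*\rangle,
\end{align}
substitute the two Fenchel--Young expressions, and cancel $\langle x^*,x\rangle$ against $\langle x,x^*\rangle$. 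What remains is $\langle y^*,y\rangle-\varphi(y)+\varphi(x)-\langle x,y^*\rangle = \varphi(x)-\varphi(y)-\langle\nabla\varphi(y),x-y\rangle = D_\varphi(x,y)$.

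The main obstacle is justifying that the supremum defining $\varphi^*$ is attained at $x$ and that $\varphi^*$ is differentiable at $x^*$. Both facts are exactly what the Legendre-type assumption in the preceding remark provides, so I will invoke it there rather than reprove it.
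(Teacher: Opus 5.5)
Your proof is correct; the paper gives no proof of its own and simply cites the result, and your arguments are the standard ones: the first-order and affine-perturbation arguments for items 1--2, and for item 3 the Legendre-duality computation via the Fenchel--Young equality and $\nabla\varphi^{*}=(\nabla\varphi)^{-1}$, which is how the cited reference derives it. One small sharpening: attainment of the supremum defining $\varphi^{*}(\nabla\varphi(x))$ at $x$ already follows from the gradient inequality you proved in item 1, so the Legendre-type assumption is needed only for the differentiability of $\varphi^{*}$ and the identity $\nabla\varphi^{*}(\nabla\varphi(x))=x$.
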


\begin{prop}[\text{\cite[Thm.~2.3]{pfau2025generalizedbiasvariancedecompositionbregman}}] \label{prop:BV_decomposition_Bregman}
Let $X$ be a random variable on $\Omega$. Then, the expected Bregman divergences admit the following decompositions:
\begin{align}
\vE[D_{\varphi}(x, X)] &= D_{\varphi}(x, x^{*}) + \vE[D_{\varphi}(x^{*}, X)], \\
\vE[D_{\varphi}(X, y)] &= D_{\varphi}(\vE[X], y) + \vE[D_{\varphi}(X, \vE[X])],
\end{align}
where
\begin{align}
x^{*} = \argmin_{x\in \Omega} \vE[D_{\varphi}(x, X)].
\end{align}
\end{prop}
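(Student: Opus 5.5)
The plan is to prove the two identities of Proposition~\ref{prop:BV_decomposition_Bregman} by direct algebra from the definition of $D_{\varphi}$. The first identity also needs $x^{*}$ identified explicitly, which we will do by passing to the dual space through $\nabla\varphi$ and $\nabla\varphi^{*}=(\nabla\varphi)^{-1}$ from \eqref{eq:nabla_formula}. Throughout we assume the relevant expectations are finite, and $\vE[X]\in\Omega$ and $\vE[\nabla\varphi(X)]\in\nabla\varphi(\Omega)$.

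\textbf{Second identity.} Expand both sides with the definition and subtract. We have
\begin{align}
\vE[D_{\varphi}(X,y)] - \vE[D_{\varphi}(X,\vE[X])] = \vE\bigl[\varphi(\vE[X]) - \varphi(y) - \langle \nabla\varphi(y), X-y\rangle + \langle \nabla\varphi(\vE[X]), X-\vE[X]\rangle\bigr].
\end{align}
By linearity of expectation, the last inner product has mean zero. The term $\langle\nabla\varphi(y),X-y\rangle$ has mean $\langle\nabla\varphi(y),\vE[X]-y\rangle$. What remains is exactly $D_{\varphi}(\vE[X],y)$. This step is routine.

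\textbf{First identity.} Here we first characterize the minimizer. Set $\mu^{*}:=\vE[\nabla\varphi(X)]$ and $\bar{x}:=\nabla\varphi^{*}(\mu^{*})$, so that $\nabla\varphi(\bar{x})=\mu^{*}$. For arbitrary $x$, expand as before:
\begin{align}
\vE[D_{\varphi}(x,X)] - \vE[D_{\varphi}(\bar{x},X)] = \varphi(x)-\varphi(\bar{x}) - \vE\bigl[\langle\nabla\varphi(X), x-\bar{x}\rangle\bigr].
\end{align}
The terms $\varphi(X)$ and $\langle \nabla\varphi(X),X\rangle$ cancel. Since $\vE[\langle\nabla\varphi(X),x-\bar{x}\rangle]=\langle\mu^{*},x-\bar{x}\rangle=\langle\nabla\varphi(\bar{x}),x-\bar{x}\rangle$, the right-hand side equals $D_{\varphi}(x,\bar{x})$.

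By Proposition~\ref{prop:basic_property_Bregman_divergence}(1), $D_{\varphi}(x,\bar{x})\ge 0$, with equality if and only if $x=\bar{x}$. Hence $\bar{x}$ is the unique minimizer, that is, $x^{*}=\bar{x}=\nabla\varphi^{*}(\vE[\nabla\varphi(X)])$. Substituting back gives the claimed decomposition.

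\textbf{Main obstacle.} The key step is recognizing that the minimizer is the dual mean $\nabla\varphi^{*}(\vE[\nabla\varphi(X)])$ rather than $\vE[X]$. This requires that $\vE[\nabla\varphi(X)]$ lie in the range of $\nabla\varphi$, which the Legendre-type assumption provides through the bijectivity of $\nabla\varphi$ (together with convexity of the image domain). Equivalently, we could verify the first identity by invoking Proposition~\ref{prop:basic_property_Bregman_divergence}(3) to rewrite $D_{\varphi}(x,X)=D_{\varphi^{*}}(\nabla\varphi(X),\nabla\varphi(x))$ and then applying the second identity to $\varphi^{*}$ and the random variable $\nabla\varphi(X)$. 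This alternative route gives a consistency check.
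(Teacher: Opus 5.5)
Your proof is correct. The paper gives no proof of its own here. It imports the decomposition from Pfau's Thm.~2.3 and, in the following remark, imports the characterization \eqref{eq:characterization_x_star} of $x^{*}$ as a separate cited lemma.

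Your argument is self-contained and obtains both results at once. The pointwise three-point identity is $D_{\varphi}(x,X)=D_{\varphi}(\bar{x},X)+\varphi(x)-\varphi(\bar{x})-\langle\nabla\varphi(X),x-\bar{x}\rangle$, and its correction term is affine in $\nabla\varphi(X)$. Taking expectations gives $\vE[D_{\varphi}(x,X)]=\vE[D_{\varphi}(\bar{x},X)]+D_{\varphi}(x,\bar{x})$. Nonnegativity of $D_{\varphi}$ then gives global minimality and uniqueness of $\bar{x}$ directly, with no first-order stationarity condition or differentiation under the expectation.

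Two small refinements:
\begin{itemize}
\item Since $\varphi(X)$ and $\langle\nabla\varphi(X),X\rangle$ cancel pointwise, you only need $\nabla\varphi(X)$ integrable and $\vE[D_{\varphi}(\bar{x},X)]<\infty$. The latter is required anyway for the $\argmin$ to be meaningful.
\item Your range condition $\vE[\nabla\varphi(X)]\in\nabla\varphi(\Omega)$ is automatic under the paper's Legendre-type assumption. The image of $\nabla\varphi$ is then the open convex set $\mathrm{int}(\mathrm{dom}\,\varphi^{*})$, and any such set contains the mean of an integrable random vector supported in it.
\end{itemize}

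Your alternative dual route via Proposition~\ref{prop:basic_property_Bregman_divergence}(3) is exactly the device the paper uses in the proof of Theorem~\ref{thm:Bregman_Rao_Blackwell}.
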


\begin{remark}
It follows from \cite[Lemma~2.2]{pfau2025generalizedbiasvariancedecompositionbregman} 
that the minimizer $x^{*}$ is characterized by
\begin{align}
\nabla \varphi(x^{*}) = \vE[\nabla\varphi (X)]. \label{eq:characterization_x_star}
\end{align}
\end{remark}

\section{Bregman UMVUEs} \label{sec:Bregman_UMVUE}
In this section, we formulate the problem of finding Bregman UMVUEs based on the two Bregman losses 
$\ell_{\varphi}(\theta, \hat{\theta}) := D_{\varphi}(\theta,\hat{\theta})$, $\tilde{\ell}_{\varphi}(\theta, \hat{\theta}) := D_{\varphi}(\hat{\theta},\theta)$, 
and their corresponding risk functions $R_{\ell_{\varphi}}(\theta, \delta):=\vE_{\theta}[\ell_{\varphi}(\theta, \delta(X^{n}))]$ and $R_{\tilde{\ell}_{\varphi}}(\theta, \delta):=\vE_{\theta}[\tilde{\ell}_{\varphi}(\theta, \delta(X^{n}))]$, respectively.
From the bias--variance decomposition in Proposition \ref{prop:BV_decomposition_Bregman}, these risk functions can be decomposed as follows.

\begin{prop} \label{prop:BV_decomposition_risk}
The following identities hold:
\begin{align}
R_{\ell_{\varphi}}(\theta, \delta)
&=D_{\varphi}(\theta, \hat{\theta}^{*}_{\delta}) + \vE_{\theta}\left[D_{\varphi}(\hat{\theta}^{*}_{\delta}, \delta(X^{n}))\right], \label{eq:BV_decomposition_type_I}
\\
R_{\tilde{\ell}_{\varphi}}(\theta, \delta)
&=D_{\varphi}(\vE_{\theta}[\delta(X^{n})], \theta)+\vE_{\theta}\left[D_{\varphi}(\delta(X^{n}), \vE_{\theta}[\delta(X^{n})])\right], \label{eq:BV_decomposition_type_II}
\end{align}
where
\begin{align}
\hat{\theta}^{*}_{\delta} &:= \argmin_{\hat{\theta}}\vE_{\theta}\left[D_{\varphi}(\hat{\theta}, \delta(X^{n}))\right].
\end{align}
\end{prop}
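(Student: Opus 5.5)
The plan is to obtain both identities by applying Proposition~\ref{prop:BV_decomposition_Bregman} to the random variable $X:=\delta(X^{n})$ under the law $p_{X\mid\theta}$, with $\theta$ held fixed. Before doing so, I will check that $\delta(X^{n})$ takes values in $\Omega$ and that $\vE_{\theta}[\delta(X^{n})]$ and $\vE_{\theta}[\nabla\varphi(\delta(X^{n}))]$ are finite. I will treat these integrability conditions as standing assumptions under which the risks are well defined.

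For \eqref{eq:BV_decomposition_type_I}, I take the first identity of Proposition~\ref{prop:BV_decomposition_Bregman} with the fixed point $x=\theta$. The left side $\vE_{\theta}[D_{\varphi}(\theta,\delta(X^{n}))]$ is exactly $R_{\ell_{\varphi}}(\theta,\delta)$. The point $x^{*}$ there is the minimizer of $\hat{\theta}\mapsto\vE_{\theta}[D_{\varphi}(\hat{\theta},\delta(X^{n}))]$, which is $\hat{\theta}^{*}_{\delta}$ by definition. Substituting gives the claimed identity.

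Because Proposition~\ref{prop:BV_decomposition_Bregman} is assumed, the only point to settle is that $\hat{\theta}^{*}_{\delta}$ exists, is unique, and lies in $\Omega$. By the remark following that proposition, it is characterized by $\nabla\varphi(\hat{\theta}^{*}_{\delta})=\vE_{\theta}[\nabla\varphi(\delta(X^{n}))]$. Under the Legendre assumption, $\nabla\varphi$ is a bijection onto $\operatorname{int}(\operatorname{dom}\varphi^{*})$, which is convex. The expectation therefore lies in this convex set, so $\hat{\theta}^{*}_{\delta}=\nabla\varphi^{*}\bigl(\vE_{\theta}[\nabla\varphi(\delta(X^{n}))]\bigr)$ is well defined by \eqref{eq:nabla_formula}.

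As a sanity check, I would verify the identity directly by expanding both sides with the definition of $D_{\varphi}$. The cross terms combine into $\langle \nabla\varphi(\hat{\theta}^{*}_{\delta})-\vE_{\theta}[\nabla\varphi(\delta)],\,\theta-\hat{\theta}^{*}_{\delta}\rangle$, which vanishes by the characterization above.

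For \eqref{eq:BV_decomposition_type_II}, I take the second identity of Proposition~\ref{prop:BV_decomposition_Bregman} with $y=\theta$. The left side $\vE_{\theta}[D_{\varphi}(\delta(X^{n}),\theta)]$ is $R_{\tilde{\ell}_{\varphi}}(\theta,\delta)$, and the right side reads off directly, with $\vE[X]$ replaced by $\vE_{\theta}[\delta(X^{n})]$. Here one needs $\vE_{\theta}[\delta(X^{n})]\in\Omega$, which holds because $\Omega$ is convex.

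I expect no real obstacle, since the whole argument is bookkeeping. The only delicate step is the well-definedness of $\hat{\theta}^{*}_{\delta}$ in the first identity, in particular that the dual mean lies in the range of $\nabla\varphi$. I would handle it through the convexity of the range of a Legendre gradient map.
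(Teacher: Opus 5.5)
Your proposal is correct and takes the paper's approach: the paper obtains both identities by applying Proposition~\ref{prop:BV_decomposition_Bregman} to $\delta(X^{n})$ under $p_{X\mid\theta}$, taking $x=y=\theta$ and identifying $x^{*}$ with $\hat{\theta}^{*}_{\delta}$. Your integrability checks and your argument that $\hat{\theta}^{*}_{\delta}$ exists, via $\vE_{\theta}[\nabla\varphi(\delta(X^{n}))]$ lying in the convex range of $\nabla\varphi$, are details the paper leaves implicit rather than a different route.
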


\begin{remark}
For the point estimation of a scalar parameter, when $\varphi(x)=\frac{1}{2}x^{2}$, these two decompositions coincide and recover the standard bias--variance decomposition in Eq.~\eqref{eq:BV_decomposition}.
\end{remark}

From Proposition \ref{prop:BV_decomposition_risk}, we define two types of Bregman unbiasedness.

\begin{definition}[Bregman unbiasedness] \label{def:Bregman_unbiasedness}
\ 
\begin{enumerate}
\item An estimator $\delta(X^{n})$ of $\theta$ is a \textit{type-I Bregman unbiased estimator} if for all $\theta\in\Theta$,
\begin{align}
D_{\varphi}(\theta, \hat{\theta}^{*}_{\delta}) = 0. \label{eq:type_I_unbiased}
\end{align}

\item An estimator $\delta(X^{n})$ of $\theta$ is a \textit{type-II Bregman unbiased estimator} if for all $\theta\in\Theta$,
\begin{align}
D_{\varphi}(\vE_{\theta}[\delta(X^{n})], \theta) = 0. \label{eq:type_II_unbiased}
\end{align}
\end{enumerate}
\end{definition}

These notions of unbiasedness are consistent with the generalized notion of unbiasedness introduced by Lehmann \cite{b5ede1a1-5922-3545-a373-ec4871dfdd4e}.

\begin{definition}[\text{\cite{b5ede1a1-5922-3545-a373-ec4871dfdd4e}}]
Let $\ell(\theta, \hat{\theta})$ be a loss function.
An estimator $\delta(X^{n})$ is unbiased with respect to the loss function $\ell(\theta, \hat{\theta})$ if for all $\theta,\theta^{\prime}\in\Theta$,
\begin{align}
\vE_{\theta}[\ell(\theta, \delta(X^{n}))]
\leq
\vE_{\theta}[\ell(\theta^{\prime}, \delta(X^{n}))].
\end{align}
\end{definition}

\begin{prop} \label{prop:consistency}
\ 
\begin{enumerate}
\item An estimator $\delta(X^{n})$ is unbiased with respect to the loss function $\ell_{\varphi}(\theta,\hat{\theta})$ if and only if $\delta(X^{n})$ is a type-I Bregman unbiased estimator.
\item An estimator $\delta(X^{n})$ is unbiased with respect to the loss function $\tilde{\ell}_{\varphi}(\theta,\hat{\theta})$ if and only if $\delta(X^{n})$ is a type-II Bregman unbiased estimator.
\end{enumerate}
\end{prop}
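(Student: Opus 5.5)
The plan is to apply the two identities of Proposition~\ref{prop:BV_decomposition_Bregman} with the first argument of the loss replaced by an arbitrary $\theta'\in\Theta$. In each case the Lehmann-unbiasedness condition then becomes the statement that a Bregman divergence, as a function of $\theta'$, attains its minimum at $\theta'=\theta$. Part~1 and Part~2 are handled in parallel.

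For Part~1, fix $\theta$ and write $Y:=\delta(X^{n})$, distributed according to $p_{X\mid\theta}$. Applying the first identity of Proposition~\ref{prop:BV_decomposition_Bregman} with $x=\theta'$ gives, for every $\theta'\in\Theta$,
\begin{align}
\vE_{\theta}[\ell_{\varphi}(\theta',\delta(X^{n}))] = D_{\varphi}(\theta',\hat{\theta}^{*}_{\delta}) + \vE_{\theta}\bigl[D_{\varphi}(\hat{\theta}^{*}_{\delta},\delta(X^{n}))\bigr].
\end{align}
Here $\hat{\theta}^{*}_{\delta}$ depends on $\theta$ but not on $\theta'$. The second term is therefore a constant $C_\theta$ in $\theta'$.

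Lehmann unbiasedness says that $D_{\varphi}(\theta,\hat{\theta}^{*}_{\delta})\le D_{\varphi}(\theta',\hat{\theta}^{*}_{\delta})$ for all $\theta'\in\Theta$. If $\delta$ is type-I unbiased, the left side is $0$, and by Proposition~\ref{prop:basic_property_Bregman_divergence}(1) the right side is nonnegative, so the inequality holds. Conversely, suppose the inequality holds and take $\theta'=\hat{\theta}^{*}_{\delta}$. Then $D_{\varphi}(\theta,\hat{\theta}^{*}_{\delta})\le 0$, hence it equals $0$, which is \eqref{eq:type_I_unbiased}.

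This step needs $\hat{\theta}^{*}_{\delta}\in\Theta$, and I expect that to be the main obstacle. The minimizer lies in $\Omega$, and through \eqref{eq:characterization_x_star} it equals $\nabla\varphi^{*}(\vE_\theta[\nabla\varphi(\delta)])$. However, $\Theta$ may be strictly smaller than $\Omega$. I would handle this by stating the standing assumption $\Theta=\Omega$, or at least that $\hat{\theta}^{*}_{\delta}\in\Theta$, which is implicit in how the paper sets up the losses. I would also note that finiteness of the expectations is assumed throughout.

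Part~2 is the same argument with the second identity of Proposition~\ref{prop:BV_decomposition_Bregman}, applied with $y=\theta'$:
\begin{align}
\vE_{\theta}[\tilde{\ell}_{\varphi}(\theta',\delta(X^{n}))] = D_{\varphi}(\vE_{\theta}[\delta(X^{n})],\theta') + \vE_{\theta}\bigl[D_{\varphi}(\delta(X^{n}),\vE_{\theta}[\delta(X^{n})])\bigr].
\end{align}
Lehmann unbiasedness then reads $D_{\varphi}(m,\theta)\le D_{\varphi}(m,\theta')$ for all $\theta'$, where $m:=\vE_\theta[\delta(X^n)]$. For the forward direction, type-II unbiasedness makes the left side $0$, so the inequality holds. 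For the converse, choose $\theta'=m$. This requires $m\in\Theta$, which holds under the same standing assumption, since $m\in\Omega$ by convexity of $\Omega$. Then $D_{\varphi}(m,\theta)=0$, which is \eqref{eq:type_II_unbiased}.
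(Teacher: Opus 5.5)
Your argument is correct and is essentially the paper's proof. Both substitute $\theta'$ into the decompositions of Proposition~\ref{prop:BV_decomposition_risk}, note that the variance-type term does not depend on $\theta'$, and reduce Lehmann unbiasedness to the bias term being minimal at $\theta'=\theta$. Your explicit converse step (taking $\theta'=\hat{\theta}^{*}_{\delta}$, resp.\ $\theta'=\mathbb{E}_{\theta}[\delta(X^{n})]$) and your requirement that this point lie in $\Theta$ are a legitimate refinement that the paper's one-line proof leaves implicit: without that condition (e.g.\ if $\Theta$ is a singleton), the implication from Lehmann unbiasedness to Bregman unbiasedness can fail.
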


\begin{proof}
For each of the two losses, apply the corresponding decomposition in Proposition \ref{prop:BV_decomposition_risk} to 
$\vE_{\theta}[\ell_{\varphi}(\theta^{\prime},\delta(X^n))]$ and $\vE_{\theta}[\tilde{\ell}_{\varphi}(\theta^{\prime},\delta(X^n))]$. 
The second term in the decomposition is independent of $\theta^{\prime}$, and hence Lehmann's unbiasedness condition is equivalent to the vanishing of the corresponding bias term. 
\end{proof}

These Bregman unbiasednesses can be characterized as follows.

\begin{prop} \label{prop:characterization_Bregman_unbiasedness}
\ 
\begin{enumerate}
\item An estimator $\delta(X^{n})$ of $\theta$ is a type-I Bregman unbiased estimator if and only if for all $\theta\in\Theta$,
\begin{align}
\nabla \varphi(\theta)
=
\vE_{\theta}\left[\nabla \varphi(\delta(X^{n}))\right]. \label{eq:type_I_unbiased_characterization}
\end{align}

\item An estimator $\delta(X^{n})$ of $\theta$ is a type-II Bregman unbiased estimator if and only if $\delta(X^{n})$ is an unbiased estimator of $\theta$.
\end{enumerate}
\end{prop}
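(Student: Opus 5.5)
Both parts follow directly from Definition~\ref{def:Bregman_unbiasedness} combined with the first property in Proposition~\ref{prop:basic_property_Bregman_divergence}. That property says $D_{\varphi}(x,y)=0$ if and only if $x=y$, so each bias condition reduces to an equality between two points of $\Omega$. We then translate that equality into the stated form.

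For part 1, we argue as follows.
\begin{itemize}
\item Condition \eqref{eq:type_I_unbiased} holds for all $\theta$ if and only if $\hat{\theta}^{*}_{\delta}=\theta$ for all $\theta$.
\item The minimizer $\hat{\theta}^{*}_{\delta}$ is characterized by \eqref{eq:characterization_x_star}, applied to the random variable $\delta(X^{n})$ under $\vE_{\theta}$. This gives $\nabla\varphi(\hat{\theta}^{*}_{\delta})=\vE_{\theta}[\nabla\varphi(\delta(X^{n}))]$.
\item Hence $\hat{\theta}^{*}_{\delta}=\theta$ implies \eqref{eq:type_I_unbiased_characterization} by applying $\nabla\varphi$ to both sides.
\item Conversely, suppose \eqref{eq:type_I_unbiased_characterization} holds. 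Then $\nabla\varphi(\hat{\theta}^{*}_{\delta})=\nabla\varphi(\theta)$. Since $\varphi$ is of Legendre type, $\nabla\varphi$ is injective, with inverse $\nabla\varphi^{*}$ by \eqref{eq:nabla_formula}. Therefore $\hat{\theta}^{*}_{\delta}=\nabla\varphi^{*}(\vE_{\theta}[\nabla\varphi(\delta(X^{n}))])=\theta$.
\end{itemize}

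For part 2, \eqref{eq:type_II_unbiased} holds if and only if $\vE_{\theta}[\delta(X^{n})]=\theta$ for every $\theta$, which is exactly Definition~\ref{def:unbiased}. Here we implicitly use that $\vE_{\theta}[\delta(X^{n})]\in\Omega$, which holds by convexity of $\Omega$, so the divergence is defined.

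The main obstacle is the well-definedness in part 1: the characterization \eqref{eq:characterization_x_star} requires the minimizer $\hat{\theta}^{*}_{\delta}$ to exist and to satisfy the first-order condition. In particular, $\vE_{\theta}[\nabla\varphi(\delta(X^{n}))]$ must be finite and must lie in the range of $\nabla\varphi$. We will handle this by invoking the Legendre-type assumption, which makes $\nabla\varphi$ a bijection onto the interior of $\mathrm{dom}\,\varphi^{*}$, a convex set, so that the expectation lies in its range. We will also take the integrability of $\nabla\varphi(\delta(X^{n}))$ as a standing assumption, implicit in Proposition~\ref{prop:BV_decomposition_risk}. With this in place, $\hat{\theta}^{*}_{\delta}=\nabla\varphi^{*}(\vE_{\theta}[\nabla\varphi(\delta(X^{n}))])$ is an explicit formula for the minimizer, and the equivalence above is immediate.
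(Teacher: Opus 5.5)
Your proof is correct and follows the same route as the paper. Part 1 combines $D_{\varphi}(x,y)=0 \Leftrightarrow x=y$ with the first-order characterization $\nabla\varphi(\hat{\theta}^{*}_{\delta})=\vE_{\theta}[\nabla\varphi(\delta(X^{n}))]$, and part 2 applies that same vanishing-divergence property directly; your explicit use of injectivity of $\nabla\varphi$ for the converse and your well-definedness remarks fill in details the paper leaves implicit.
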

\begin{proof}
For $1)$, the statement follows from Eqs.~\eqref{eq:characterization_x_star} and \eqref{eq:type_I_unbiased}. 
For $2)$, by $1)$ of Proposition \ref{prop:basic_property_Bregman_divergence}, we have
\[
D_{\varphi}(\vE_{\theta}[\delta(X^n)],\theta)=0
\quad\Longleftrightarrow\quad
\vE_{\theta}[\delta(X^n)] = \theta.
\]
This is exactly the standard unbiasedness condition in Definition \ref{def:unbiased}.
\end{proof}

\begin{remark}
From Eq.~\eqref{eq:type_I_unbiased_characterization}, the type-I Bregman unbiasedness of $\delta(X^{n})$ is equivalent to the unbiasedness of the estimator $\nabla\varphi(\delta(X^{n}))$ for $\nabla\varphi(\theta)$ in the dual space induced by $\nabla\varphi$. Meanwhile, type-II Bregman unbiasedness is equivalent to the standard notion of unbiasedness.
\end{remark}
For convenience, we introduce the following notation.

\begin{definition}
Let $\delta(X^{n})$ be an estimator of $\theta$. 
The parameter $\eta$ and the corresponding estimator $\hat{\eta}(X^{n})$ in the dual space induced by $\nabla\varphi$ are defined as
\begin{align}
\eta&:=\nabla\varphi(\theta), \\ 
\hat{\eta}(X^{n})&:=\nabla\varphi(\delta(X^{n})).
\end{align}
\end{definition}
 

Finally, we define Bregman UMVUEs.

\begin{definition}[Bregman UMVUEs]
\ 
\begin{enumerate}
\item A type-I Bregman unbiased estimator $\delta(X^{n})$ is the \textit{type-I Bregman UMVUE} if for all $\theta\in\Theta$ and for all type-I Bregman unbiased estimators $\delta^{\prime}(X^{n})$,
\begin{align}
\vE_{\theta}\left[D_{\varphi}(\hat{\theta}^{*}_{\delta}, \delta(X^{n}))\right]
\leq
\vE_{\theta}\left[D_{\varphi}(\hat{\theta}^{*}_{\delta^{\prime}}, \delta^{\prime}(X^{n}))\right]. 
\end{align}

\item A type-II Bregman unbiased estimator $\delta(X^{n})$ is the \textit{type-II Bregman UMVUE} if for all $\theta\in\Theta$ and for all type-II Bregman unbiased estimators $\delta^{\prime}(X^{n})$,
\begin{align}
&\vE_{\theta}\left[D_{\varphi}(\delta(X^{n}), \vE_{\theta}[\delta(X^{n})])\right]
\notag\\
&\leq
\vE_{\theta}\left[D_{\varphi}(\delta^{\prime}(X^{n}), \vE_{\theta}[\delta^{\prime}(X^{n})])\right].
\end{align}
\end{enumerate}
\end{definition}

From Proposition \ref{prop:characterization_Bregman_unbiasedness}, type-II Bregman unbiasedness is equivalent to the standard notion of unbiasedness.
Moreover, by $2)$ of Proposition \ref{prop:basic_property_Bregman_divergence}, the map
$\hat{\theta}\mapsto \tilde{\ell}_{\varphi}(\theta,\hat{\theta})=D_{\varphi}(\hat{\theta},\theta)$ 
is strictly convex in $\hat{\theta}$.
Thus, by the Rao--Blackwell theorem (Proposition \ref{prop:RB_thm}) and the Lehmann--Scheff{\'e} theorem (Proposition \ref{prop:LS_thm}), the estimator constructed from a complete sufficient statistic is also optimal in the type-II Bregman sense.

\begin{prop} \label{prop:typeII_UMVUE}
Let $T$ be a complete sufficient statistic for $\mathcal{P}$. 
If there exists an unbiased estimator of $\theta$ that is a function of $T$, then it is also the type-II Bregman UMVUE for any generator function $\varphi$ that satisfies the assumptions in Section \ref{ssec:Bregman_divergence}.
\end{prop}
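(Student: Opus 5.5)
The plan is to reduce the type-II Bregman problem to minimizing the risk $R_{\tilde{\ell}_{\varphi}}$ over the class of ordinary unbiased estimators, and then invoke the Lehmann--Scheff\'e theorem (Proposition~\ref{prop:LS_thm}) with the loss $\tilde{\ell}_{\varphi}(\theta,\hat{\theta})=D_{\varphi}(\hat{\theta},\theta)$.

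\textbf{Step 1: reduce the type-II criterion to risk minimization.} Let $\delta'(X^{n})$ be any type-II Bregman unbiased estimator. By 2) of Proposition~\ref{prop:characterization_Bregman_unbiasedness}, this is the same as saying that $\delta'$ is unbiased, i.e., $\vE_{\theta}[\delta'(X^{n})]=\theta$ for all $\theta$. Substituting this into the type-II decomposition \eqref{eq:BV_decomposition_type_II} makes the bias term $D_{\varphi}(\theta,\theta)$ vanish. Hence, for every unbiased $\delta'$,
\begin{align}
R_{\tilde{\ell}_{\varphi}}(\theta,\delta')=\vE_{\theta}\left[D_{\varphi}(\delta'(X^{n}),\vE_{\theta}[\delta'(X^{n})])\right].
\end{align}
So the type-II variance term equals the risk on this class. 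Being the type-II Bregman UMVUE is therefore equivalent to uniformly minimizing $R_{\tilde{\ell}_{\varphi}}(\theta,\cdot)$ among unbiased estimators.

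\textbf{Step 2: apply Lehmann--Scheff\'e to the function of $T$.} Let $\tilde{\delta}(T)$ be the unbiased function of $T$ given in the statement. By 2) of Proposition~\ref{prop:basic_property_Bregman_divergence}, $\hat{\theta}\mapsto D_{\varphi}(\hat{\theta},\theta)$ is (strictly) convex, so Proposition~\ref{prop:LS_thm} applies with $\ell=\tilde{\ell}_{\varphi}$. Take an arbitrary unbiased $\delta'$ and form $\vE[\delta'(X^{n})\mid T]$. This is unbiased by the tower property and is a function of $T$. By Lemma~\ref{lem:uniqueness}, it equals $\tilde{\delta}(T)$ a.e.\ $\mathcal{P}$. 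Consequently $\tilde{\delta}$ is the Rao--Blackwell estimator of every unbiased $\delta'$, and Propositions~\ref{prop:RB_thm} and \ref{prop:LS_thm} give
\begin{align}
R_{\tilde{\ell}_{\varphi}}(\theta,\tilde{\delta})\le R_{\tilde{\ell}_{\varphi}}(\theta,\delta')
\end{align}
for all $\theta\in\Theta$. Combined with Step~1, this is exactly the defining inequality of the type-II Bregman UMVUE. None of this depends on the particular $\varphi$ beyond its convexity.

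\textbf{Main obstacle: integrability.} The Rao--Blackwell step is Jensen's inequality for conditional expectations applied to the convex map $\hat{\theta}\mapsto D_{\varphi}(\hat{\theta},\theta)$. I would make it rigorous as follows.
\begin{itemize}
\item If $R_{\tilde{\ell}_{\varphi}}(\theta,\delta')=\infty$, the inequality is trivial.
\item Otherwise, $\delta'$ has finite expectation because it is unbiased, and conditional Jensen applies.
\item I also need $\vE[\delta'\mid T]$ to take values in $\Omega$. This holds because $\Omega$ is convex, under the standing Legendre assumptions on $\varphi$.
\end{itemize}
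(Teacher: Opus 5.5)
Your proposal is correct and matches the paper's argument: type-II unbiasedness reduces to ordinary unbiasedness via 2) of Proposition~\ref{prop:characterization_Bregman_unbiasedness}, the map $\hat{\theta}\mapsto D_{\varphi}(\hat{\theta},\theta)$ is convex, and the Rao--Blackwell and Lehmann--Scheff\'e theorems then give optimality. You spell out details the paper leaves implicit: that the risk equals the type-II variance term on the unbiased class, the identification of $\vE[\delta'(X^{n})\mid T]$ with $\tilde{\delta}(T)$ through Lemma~\ref{lem:uniqueness}, and the integrability caveats.
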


\begin{remark}
Proposition~\ref{prop:typeII_UMVUE} indicates that the type-II Bregman UMVUE does not yield a genuinely new extension of the classical framework.
Therefore, we subsequently focus on the type-I Bregman UMVUE, which yields a different notion of unbiasedness and optimality.
\end{remark}

\section{Main Results} \label{sec:main_result}

In this section, we derive analogs of the Rao--Blackwell and Lehmann--Scheff{\' e} theorems for type-I Bregman UMVUEs.
We first present an extended Rao--Blackwell theorem. The proof proceeds by applying Rao--Blackwellization in the dual space induced by $\nabla\varphi$ and then mapping back via $(\nabla\varphi)^{-1}$.

\begin{theorem}[Extended Rao--Blackwell Theorem] \label{thm:Bregman_Rao_Blackwell}
Let $T$ be a sufficient statistic for $\mathcal{P}$ and $\delta(X^{n})$ be an estimator with a finite risk. 
Define
\begin{align}
\hat{\eta}(X^{n}) &:= \nabla\varphi(\delta(X^{n})), \\
\tilde{\eta}(T) &:= \vE[\hat{\eta}(X^{n}) \mid T], \\
\tilde{\delta}(T) &:= (\nabla\varphi)^{-1}(\tilde{\eta}(T)).
\end{align}
Then,
\begin{align}
R_{\ell_{\varphi}}(\theta, \tilde{\delta}) \leq R_{\ell_{\varphi}}(\theta, \delta).
\end{align}
In particular, if $\delta(X^{n})$ is type-I Bregman unbiased, then so is $\tilde{\delta}(T)$, and
\begin{align}
\vE_{\theta}\!\left[D_{\varphi}(\hat{\theta}^{*}_{\tilde{\delta}}, \tilde{\delta}(T))\right]
\leq
\vE_{\theta}\!\left[D_{\varphi}(\hat{\theta}^{*}_{\delta}, \delta(X^{n}))\right].
\end{align}
\end{theorem}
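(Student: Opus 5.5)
The plan is to move everything to the dual space and use the duality identity $3)$ of Proposition~\ref{prop:basic_property_Bregman_divergence}:
\begin{align}
D_{\varphi}(\theta,\delta(X^{n}))=D_{\varphi^{*}}(\hat{\eta}(X^{n}),\eta).
\end{align}
This expresses the type-I loss as $\tilde{\ell}_{\varphi^{*}}(\eta,\hat{\eta})=D_{\varphi^{*}}(\hat{\eta},\eta)$, a Bregman loss whose \emph{first} argument is the estimator. That loss is convex in $\hat{\eta}$, by $2)$ of Proposition~\ref{prop:basic_property_Bregman_divergence} applied to $\varphi^{*}$; $\varphi^{*}$ is again of Legendre type, so it is a legitimate generator.

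We need $\hat{\eta}(X^{n})$ to have finite expectation so that $\tilde{\eta}(T)$ is defined. I would take this as part of ``finite risk,'' or argue it via the decomposition below. Since $T$ is sufficient, $\tilde{\eta}(T)$ does not depend on $\theta$ and is a genuine statistic. Also $\tilde{\delta}(T)=\nabla\varphi^{*}(\tilde{\eta}(T))$ is well defined by Eq.~\eqref{eq:nabla_formula}. This requires $\tilde{\eta}(T)$ to lie in the domain of $\nabla\varphi^{*}$. That domain is the image of $\nabla\varphi$, which is convex for Legendre functions, and a conditional expectation of a random variable valued in a convex set stays in it (a.s.). This is the step I expect to need the most care. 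I would handle it by invoking the Legendre/essential-smoothness properties from the cited references rather than proving it.

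\textbf{Risk inequality.} I would apply the second identity of Proposition~\ref{prop:BV_decomposition_Bregman}, with generator $\varphi^{*}$, conditionally on $T$:
\begin{align}
\vE\bigl[D_{\varphi^{*}}(\hat{\eta},\eta)\mid T\bigr]=D_{\varphi^{*}}(\tilde{\eta}(T),\eta)+\vE\bigl[D_{\varphi^{*}}(\hat{\eta},\tilde{\eta}(T))\mid T\bigr]\ge D_{\varphi^{*}}(\tilde{\eta}(T),\eta).
\end{align}
Alternatively, conditional Jensen with convexity gives the same inequality. Taking $\vE_{\theta}$ yields $R_{\ell_{\varphi}}(\theta,\delta)\ge \vE_{\theta}[D_{\varphi^{*}}(\tilde{\eta}(T),\eta)]$. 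Applying duality again, $D_{\varphi^{*}}(\tilde{\eta}(T),\eta)=D_{\varphi}(\theta,\tilde{\delta}(T))$, since $\nabla\varphi(\tilde{\delta}(T))=\tilde{\eta}(T)$. The right-hand side is therefore $R_{\ell_{\varphi}}(\theta,\tilde{\delta})$, which proves the first claim. Equivalently, one can invoke Proposition~\ref{prop:RB_thm} directly for the loss $\tilde{\ell}_{\varphi^{*}}$ in the $\eta$-parametrization.

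\textbf{Unbiasedness and the variance term.} If $\delta$ is type-I unbiased, Proposition~\ref{prop:characterization_Bregman_unbiasedness} $1)$ gives $\vE_{\theta}[\hat{\eta}]=\eta$. The tower property then gives $\vE_{\theta}[\nabla\varphi(\tilde{\delta}(T))]=\vE_{\theta}[\tilde{\eta}(T)]=\eta=\nabla\varphi(\theta)$, so $\tilde{\delta}$ is type-I unbiased by the same characterization. For a type-I unbiased estimator, Eq.~\eqref{eq:characterization_x_star} gives $\hat{\theta}^{*}=\theta$, so the bias term in Eq.~\eqref{eq:BV_decomposition_type_I} vanishes. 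Hence the variance term equals the full risk for both $\delta$ and $\tilde{\delta}$, and the final inequality follows from the risk inequality already established.
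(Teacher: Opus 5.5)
Your proposal is correct and follows essentially the same route as the paper. You rewrite the loss as $D_{\varphi^{*}}(\hat{\eta}(X^{n}),\eta)$ by duality, apply conditional Jensen in the dual space (your conditional bias--variance identity is just the exact form of that Jensen gap), and map back through $(\nabla\varphi)^{-1}=\nabla\varphi^{*}$; the tower property then gives type-I unbiasedness of $\tilde{\delta}(T)$, so the bias term vanishes and the variance-term inequality reduces to the risk inequality. Your remarks on the integrability of $\hat{\eta}(X^{n})$ and on $\tilde{\eta}(T)$ staying in the convex range of $\nabla\varphi$ address well-definedness points the paper leaves implicit.
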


\begin{proof} Let $\eta:=\nabla\varphi(\theta)$. Then, 
\begin{align}
\ell_{\varphi}(\theta, \delta(X^{n})) &= D_{\varphi}(\theta, \delta(X^{n})) \\
&\overset{(a)}{=} D_{\varphi^{*}}(\nabla\varphi(\delta(X^{n})), \nabla\varphi(\theta)) \\ 
&= D_{\varphi^{*}}(\hat{\eta}(X^{n}), \eta), 
\end{align}
where $(a)$ follows from $3)$ of Proposition \ref{prop:basic_property_Bregman_divergence}.
By the convexity of $D_{\varphi^{*}}(\cdot, \eta)$, we obtain 
\begin{align}
D_{\varphi^{*}}(\tilde{\eta}(T), \eta) &= D_{\varphi^{*}}(\vE[\hat{\eta}(X^{n})\mid T], \eta) \\ 
&\overset{(b)}{\leq} \vE[D_{\varphi^{*}}(\hat{\eta}(X^{n}), \eta)\mid T], 
\end{align}
where $(b)$ follows from Jensen's inequality. 
Taking the expectations on $T$, by the law of total expectation, we obtain 
\begin{align}
\vE[D_{\varphi^{*}}(\tilde{\eta}(T), \eta)] &\leq \vE[D_{\varphi^{*}}(\hat{\eta}(X^{n}), \eta)]. 
\end{align}
From $3)$ of Proposition \ref{prop:basic_property_Bregman_divergence}, we obtain 
\begin{align}
&\vE[D_{\varphi}(\nabla\varphi^{*}(\eta), \nabla\varphi^{*}(\tilde{\eta}(T)))] \notag \\ 
&\leq \vE[D_{\varphi}(\nabla\varphi^{*}(\eta), \nabla\varphi^{*}(\hat{\eta}(X^{n})))]. 
\end{align}
Using Eq.~\eqref{eq:nabla_formula}, we obtain 
\begin{align}
&\vE[D_{\varphi}((\nabla\varphi)^{-1}(\eta), (\nabla\varphi)^{-1}(\tilde{\eta}(T)))] \notag \\ 
&\leq \vE[D_{\varphi}((\nabla\varphi)^{-1}(\eta), (\nabla\varphi)^{-1}(\hat{\eta}(X^{n})))], 
\end{align}
which leads to 
\begin{align}
R_{\ell_{\varphi}}(\theta, \tilde{\delta}) &= \vE[D_{\varphi}(\theta, \tilde{\delta}(T))] \\ 
&\leq \vE[D_{\varphi}(\theta, \delta(X^{n}))] = R_{\ell_{\varphi}}(\theta, \delta).
\end{align}

From Eqs.~\eqref{eq:characterization_x_star} and \eqref{eq:BV_decomposition_type_I}, to prove the last statement, it suffices to show that $\tilde{\eta}(T)$ is an unbiased estimator of $\eta$. Indeed, 
\begin{align}
\vE[\tilde{\eta}(T)] &=  \vE[\vE[\hat{\eta}(X^{n})\mid T]] \\ 
&= \vE[\hat{\eta}(X^{n})] \overset{(c)}{=} \eta, 
\end{align}
where $(c)$ follows from the fact that $\delta(X^{n})$ is a type-I Bregman unbiased estimator. 
\end{proof}

Next, we derive Lehmann--Scheff{\' e} theorem for type-I Bregman UMVUE. 
\begin{lemma} \label{lem:extended_uniqueness}
Let $T$ be a complete sufficient statistic for $\mathcal{P}$.
Let $\tilde{\eta}_{1}(T)$ and $\tilde{\eta}_{2}(T)$ be unbiased estimators for $\eta:=\nabla\varphi(\theta)$, and define
\begin{align}
\tilde{\delta}_{i}(T):=(\nabla\varphi)^{-1}(\tilde{\eta}_{i}(T)), \quad i=1,2.
\end{align}
Then,
\begin{align}
\tilde{\delta}_{1}(T) = \tilde{\delta}_{2}(T), \quad \text{a.e. } \mathcal{P}.
\end{align}
\end{lemma}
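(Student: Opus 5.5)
The plan is to reduce the statement to the classical uniqueness result, Lemma~\ref{lem:uniqueness}, applied in the dual space, and then transport the conclusion back through the bijection $(\nabla\varphi)^{-1}$. The only new ingredient is that the estimand is now $\eta=\nabla\varphi(\theta)$ rather than $\theta$, and that $\eta$ may be vector-valued. For this reason I would repeat the completeness argument directly rather than quote Lemma~\ref{lem:uniqueness} verbatim, since that lemma is stated for estimators of $\theta$.

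\textbf{Step 1.} Set $h(T):=\tilde{\eta}_{1}(T)-\tilde{\eta}_{2}(T)$, a (vector-valued) function of $T$ alone. Because both $\tilde{\eta}_{i}(T)$ are unbiased for $\eta$, linearity of expectation gives
\begin{align}
\vE_{\theta}[h(T)]=\nabla\varphi(\theta)-\nabla\varphi(\theta)=0 \quad \text{for all }\theta\in\Theta.
\end{align}
Applying Definition~\ref{def:complete_sufficient_statistic} to each coordinate $h_{j}(T)$, $j=1,\dots,d$, gives $h_{j}(T)=0$ a.e.\ $\mathcal{P}$ for each $j$. A finite union of null sets is null under every $P\in\mathcal{P}$, so $\tilde{\eta}_{1}(T)=\tilde{\eta}_{2}(T)$ a.e.\ $\mathcal{P}$.

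\textbf{Step 2.} On the event where $\tilde{\eta}_{1}(T)=\tilde{\eta}_{2}(T)$, apply the map $(\nabla\varphi)^{-1}=\nabla\varphi^{*}$ from Eq.~\eqref{eq:nabla_formula}. This map is well defined because $\varphi$ is of Legendre type, so $\nabla\varphi$ is a bijection onto its range. It follows that $\tilde{\delta}_{1}(T)=\tilde{\delta}_{2}(T)$ on that same event, which has full measure under every $P\in\mathcal{P}$. This is the claim. Only the fact that $(\nabla\varphi)^{-1}$ is a function is used here, not any continuity property.

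\textbf{Main obstacle.} The one point needing care is that $\tilde{\delta}_{i}$ is defined at all, that is, that $\tilde{\eta}_{i}(T)$ takes values in the range of $\nabla\varphi$ (the interior of $\operatorname{dom}\varphi^{*}$). The statement implicitly assumes this by defining $\tilde{\delta}_{i}$. If it holds only almost surely, the exceptional set is null and can simply be added to the null set from Step~1. Measurability of $h$ and integrability, which are needed to invoke completeness, follow from the assumption that $\tilde{\eta}_{i}$ are unbiased estimators and so have finite expectation.
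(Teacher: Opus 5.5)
Your proof is correct and follows the paper's argument. You first get $\tilde{\eta}_{1}(T)=\tilde{\eta}_{2}(T)$ a.e.\ $\mathcal{P}$ from completeness, then push this equality through the function $(\nabla\varphi)^{-1}$; the only difference is that the paper cites Lemma~\ref{lem:uniqueness} directly, whereas you re-run the completeness argument coordinatewise, which properly covers the vector-valued estimand $\eta$.
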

\begin{proof}
From Lemma \ref{lem:uniqueness}, we obtain $\tilde{\eta}_{1}(T)=\tilde{\eta}_{2}(T), \text{a.e. $\mathcal{P}$}$. Thus, 
$\tilde{\delta}_{1}(T) = (\nabla\varphi)^{-1}(\tilde{\eta}_{1}(T)) = (\nabla\varphi)^{-1}(\tilde{\eta}_{2}(T)) = \tilde{\delta}_{2}(T), \text{a.e. $\mathcal{P}$}$.
\end{proof}

\begin{theorem}[Extended Lehmann--Scheff{\' e} Theorem] \label{thm:Bregman_Lehmann_Scheffe}
Let $T$ be a complete sufficient statistic for $\mathcal{P}$.
Assume that there exists a type-I Bregman unbiased estimator $\delta(X^{n})$ of $\theta$, and define
\begin{align}
\hat{\eta}(X^{n}) &:= \nabla\varphi(\delta(X^{n})), \\ 
\tilde{\eta}(T)&:=\vE[\hat{\eta}(X^{n})\mid T], \\ 
\tilde{\delta}(T)&:= (\nabla\varphi)^{-1}(\tilde{\eta}(T)). 
\end{align}
Then, $\tilde{\delta}(T)$ is the unique type-I Bregman UMVUE of $\theta$.
\end{theorem}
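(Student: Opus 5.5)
The plan is to mirror the classical Lehmann--Scheff{\'e} argument in the dual coordinates $\eta=\nabla\varphi(\theta)$. First, Theorem \ref{thm:Bregman_Rao_Blackwell} already shows that $\tilde{\delta}(T)$ is type-I Bregman unbiased, because $\tilde{\eta}(T)$ is an unbiased estimator of $\eta$. We need a usable form of the variance term. By Eq.~\eqref{eq:characterization_x_star}, for any type-I Bregman unbiased $\delta'$ we have $\nabla\varphi(\hat{\theta}^{*}_{\delta'})=\vE_{\theta}[\nabla\varphi(\delta'(X^n))]=\eta$, so $\hat{\theta}^{*}_{\delta'}=\theta$. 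Hence the variance term $\vE_{\theta}[D_{\varphi}(\hat{\theta}^{*}_{\delta'},\delta'(X^n))]$ equals the full risk $R_{\ell_{\varphi}}(\theta,\delta')$. The same holds for $\tilde{\delta}$. So it suffices to show that $R_{\ell_{\varphi}}(\theta,\tilde{\delta})\le R_{\ell_{\varphi}}(\theta,\delta')$ for every type-I Bregman unbiased $\delta'$ and every $\theta$.

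Fix such a $\delta'$ with finite risk; if the risk is infinite there is nothing to show. Apply Theorem \ref{thm:Bregman_Rao_Blackwell} to $\delta'$, conditioning on the same sufficient statistic $T$. This gives $\tilde{\eta}'(T):=\vE[\nabla\varphi(\delta'(X^n))\mid T]$ and $\tilde{\delta}'(T):=(\nabla\varphi)^{-1}(\tilde{\eta}'(T))$, with $\tilde{\delta}'$ type-I Bregman unbiased and $R_{\ell_{\varphi}}(\theta,\tilde{\delta}')\le R_{\ell_{\varphi}}(\theta,\delta')$. Both $\tilde{\eta}(T)$ and $\tilde{\eta}'(T)$ are unbiased estimators of $\eta$ that are functions of the complete sufficient statistic $T$. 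Lemma \ref{lem:extended_uniqueness} then gives $\tilde{\delta}(T)=\tilde{\delta}'(T)$ a.e.\ $\mathcal{P}$, so their risks coincide. Chaining these gives $R_{\ell_{\varphi}}(\theta,\tilde{\delta})=R_{\ell_{\varphi}}(\theta,\tilde{\delta}')\le R_{\ell_{\varphi}}(\theta,\delta')$, which is the UMVUE property.

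For uniqueness, suppose $\delta'$ is another type-I Bregman UMVUE, so it attains the same risk as $\tilde{\delta}$. In the dual form from the proof of Theorem \ref{thm:Bregman_Rao_Blackwell}, the Jensen step uses the map $\hat\eta\mapsto D_{\varphi^*}(\hat\eta,\eta)$. This map is strictly convex, being $\varphi^*$ plus an affine function, and $\varphi^*$ is strictly convex because $\varphi$ is of Legendre type. Equality of risks therefore forces equality in the conditional Jensen inequality almost surely. That in turn forces $\nabla\varphi(\delta'(X^n))$ to be a.s.\ equal to its conditional expectation given $T$, i.e.\ $\delta'=\tilde{\delta}'(T)=\tilde{\delta}(T)$ a.e.\ $\mathcal{P}$.

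I expect this strict-Jensen equality step to be the main obstacle. It needs two things to hold carefully: finiteness of the risks, so that subtracting equal quantities is legitimate, and the standard fact that equality in conditional Jensen for a strictly convex function forces the conditional distribution to be degenerate. An alternative route is to restrict uniqueness to estimators that are functions of $T$. There Lemma \ref{lem:extended_uniqueness} gives uniqueness directly, and the strict-convexity argument only serves to show that any UMVUE must coincide with its own Rao--Blackwellization.
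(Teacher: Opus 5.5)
Your proof of the UMVUE property follows the paper's: you Rao--Blackwellize an arbitrary type-I unbiased $\delta'$ in the dual coordinates, use Lemma~\ref{lem:extended_uniqueness} to identify $\tilde{\delta}'(T)$ with $\tilde{\delta}(T)$ a.e.\ $\mathcal{P}$, and chain with Theorem~\ref{thm:Bregman_Rao_Blackwell}; your observation that $\hat{\theta}^{*}_{\delta'}=\theta$, so the variance term equals the risk, is what the second part of that theorem encodes. Your strict-convexity and equality-in-conditional-Jensen argument for uniqueness is correct given finite risk, and it goes beyond the paper, whose proof only gives uniqueness among dual-unbiased functions of $T$ via Lemma~\ref{lem:extended_uniqueness} and never shows that an arbitrary type-I UMVUE must coincide with its own Rao--Blackwellization.
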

\begin{proof}
Let $\delta^{\prime}(X^{n})$ be any type-I Bregman unbiased estimator of $\theta$ and define 
\begin{align}
\hat{\eta}^{\prime}(X^{n}) &:= \nabla\varphi(\delta^{\prime}(X^{n})), \\ 
\tilde{\eta}^{\prime}(T) &:= \vE[\hat{\eta}^{\prime}(X^{n})\mid T], \\ 
\tilde{\delta}^{\prime}(T) &:= (\nabla\varphi)^{-1}(\tilde{\eta}^{\prime}(T)).
\end{align}
Then, we obtain 
\begin{align}
\vE_{\theta}\left[D_{\varphi}(\hat{\theta}^{*}_{\tilde{\delta}}, \tilde{\delta}(T))\right] 
&\overset{(a)}{=} \vE_{\theta}\left[D_{\varphi}(\hat{\theta}^{*}_{\tilde{\delta}^{\prime}}, \tilde{\delta}^{\prime}(T))\right] \\
&\overset{(b)}{\leq} \vE_{\theta}\left[D_{\varphi}(\hat{\theta}^{*}_{\delta^{\prime}}, \delta^{\prime}(X^{n}))\right], 
\end{align}
where $(a)$ follows from Lemma \ref{lem:extended_uniqueness}; $(b)$ follows from Theorem \ref{thm:Bregman_Rao_Blackwell}.
\end{proof}

The following result follows from Lemma \ref{lem:extended_uniqueness} and Theorem \ref{thm:Bregman_Lehmann_Scheffe}.
\begin{cor} \label{cor:Bregman_Lehmann_Scheffe}
Let $T$ be a complete sufficient statistic and $\tilde{\eta}(T)$ be an unbiased estimator of $\eta:=\nabla\varphi(\theta)$.
Then, 
\begin{align}
\tilde{\delta}(T):=(\nabla\varphi)^{-1}(\tilde{\eta}(T))
\end{align}
is the unique type-I Bregman UMVUE.
\end{cor}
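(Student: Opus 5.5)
The plan is to reduce the corollary to Theorem~\ref{thm:Bregman_Lehmann_Scheffe} by exhibiting a type-I Bregman unbiased estimator whose Rao--Blackwellization is exactly $\tilde{\delta}(T)$. I will take $\delta(X^{n}) := \tilde{\delta}(T) = (\nabla\varphi)^{-1}(\tilde{\eta}(T))$ itself. Since $T = t(X^{n})$, this is a legitimate estimator based on $X^{n}$. Its dual estimator is $\hat{\eta}(X^{n}) = \nabla\varphi(\tilde{\delta}(T)) = \tilde{\eta}(T)$, using the bijectivity of $\nabla\varphi$ from Eq.~\eqref{eq:nabla_formula}.

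First, $\tilde{\delta}(T)$ is type-I Bregman unbiased. By hypothesis, $\vE_{\theta}[\tilde{\eta}(T)] = \nabla\varphi(\theta)$ for all $\theta$. So $\vE_{\theta}[\nabla\varphi(\tilde{\delta}(T))] = \nabla\varphi(\theta)$, and item 1) of Proposition~\ref{prop:characterization_Bregman_unbiasedness} gives type-I unbiasedness. In particular, at least one type-I Bregman unbiased estimator exists, which is the hypothesis of Theorem~\ref{thm:Bregman_Lehmann_Scheffe}.

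Second, the construction in Theorem~\ref{thm:Bregman_Lehmann_Scheffe} applied to this $\delta$ returns $\tilde{\delta}$ itself. Because $\hat{\eta}(X^{n}) = \tilde{\eta}(T)$ is $T$-measurable, we have $\vE[\hat{\eta}(X^{n}) \mid T] = \tilde{\eta}(T)$ almost surely. Mapping back by $(\nabla\varphi)^{-1}$ then gives $\tilde{\delta}(T)$. Theorem~\ref{thm:Bregman_Lehmann_Scheffe} therefore says that $\tilde{\delta}(T)$ is the unique type-I Bregman UMVUE.

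Uniqueness can also be read off Lemma~\ref{lem:extended_uniqueness}. Any other candidate of the form $(\nabla\varphi)^{-1}(\tilde{\eta}'(T))$, with $\tilde{\eta}'$ unbiased for $\eta$, agrees with $\tilde{\delta}(T)$ a.e.\ $\mathcal{P}$. Moreover, the Rao--Blackwellization of any competitor $\delta'$ is of this form by the proof of Theorem~\ref{thm:Bregman_Rao_Blackwell}.

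The only delicate point is a technical one: $\tilde{\eta}(T)$ must lie in the range of $\nabla\varphi$ so that $(\nabla\varphi)^{-1}$ is defined, and the estimator must have finite risk as Theorem~\ref{thm:Bregman_Rao_Blackwell} requires. I will handle the first by the Legendre-type assumption, under which $\nabla\varphi$ is a bijection onto the relevant dual domain. I will treat finite risk as implicit, just as it is implicit in Theorem~\ref{thm:Bregman_Lehmann_Scheffe}.
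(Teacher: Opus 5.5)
Your argument is correct and is essentially the paper's, which derives the corollary directly from Lemma~\ref{lem:extended_uniqueness} and Theorem~\ref{thm:Bregman_Lehmann_Scheffe}; your observation that $\tilde{\delta}(T)$ is itself type-I Bregman unbiased and is its own dual-space Rao--Blackwellization is exactly what lets Theorem~\ref{thm:Bregman_Lehmann_Scheffe} be applied to it. One minor correction: the Legendre-type assumption does not by itself force $\tilde{\eta}(T)$ into the range of $\nabla\varphi$ (in the paper's Bernoulli example $T/n\in\{0,1\}$ with positive probability, which the paper handles via extended reals), so well-definedness of $\tilde{\delta}(T)$ is an implicit hypothesis of the statement rather than something you can derive.
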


Finally, we present some examples of a type-I Bregman UMVUE of $\theta$.
\begin{eg}
Let $X^{n}$ be a random sample from the exponential distribution $\textsf{Exp}(\theta)$ with $n>1$ and $\varphi(x)=-\log x$.
Then, $\nabla\varphi(x)=-1/x$ and $T=\sum_{i=1}^{n} X_{i}$ is complete sufficient for $\theta$. 
Because $T \sim \textsf{Gamma}(n,1/\theta)$, we obtain $\vE[1/T]=1/(n-1)\theta$ for $n>1$.
Thus, 
\begin{align}
\tilde{\eta}(T)=-\frac{n-1}{T}
\end{align}
is unbiased for $\eta=-1/\theta$.
Therefore, 
\begin{align}
\tilde{\delta}(T)=\frac{T}{n-1} = \frac{1}{n-1}\sum_{i=1}^{n}X_{i}
\end{align}
is the type-I Bregman UMVUE. 
Meanwhile, the classical UMVUE of $\theta$ is
$\bar{X} = \frac{1}{n}\sum_{i=1}^{n}X_{i}$, which differs from the type-I Bregman UMVUE.
\end{eg}

\begin{eg}
Let $X^{n}$ be a random sample from the log-normal distribution with unknown (reparametrized) location parameter and known scale parameter $\sigma^{2}$, denoted by $\textsf{LogNormal}(\log\theta, \sigma^{2})$, and let $\varphi(x):= x\log x - x$. 
Then, $\nabla\varphi(x)=\log x$ and $T=\sum_{i=1}^{n} \log X_{i}$ is a complete sufficient statistic for $\log\theta$.
Because $\vE[T]=n\log\theta$,$\tilde{\eta}(T)={T}/{n}$ 
is unbiased for $\eta=\log\theta$.
Thus,
\begin{align}
\tilde{\delta}(T)=\exp\!\left(\frac{1}{n}\sum_{i=1}^{n} \log X_i\right) \quad (\text{geometric mean of $X^{n}$})
\end{align}
is the type-I Bregman UMVUE.
Meanwhile, the classical UMVUE of $\theta$ is $\exp\left\{\frac{1}{n}\sum_{i=1}^{n}\log X_{i} - \frac{\sigma^{2}}{2n}\right\}$, 
which is different from the type-I Bregman UMVUE.
\end{eg}

\begin{eg}
Let $X^{n}$ be a random sample from the Bernoulli distribution with natural
parameter $\theta\in\mathbb{R}$:
\begin{align}
p_{X\mid\theta}(x)
&=
\exp\{\theta x-A(\theta)\},
\end{align}
where $A(\theta)=\log(1+e^{\theta})$. 
Let $\varphi(\theta):=A(\theta)$. Then,
$p:=\nabla\varphi(\theta)={e^{\theta}}/{(1+e^{\theta})}$ is the expectation parameter.
Since $T:=\sum_{i=1}^{n}X_{i}$ is
complete sufficient statistic for $p$ and $\vE_{\theta}\left[{T}/{n}\right]=p$, 
$T/n$ is an unbiased estimator of the dual parameter $p$.
Hence, allowing the extended natural parameter space
$\overline{\mathbb{R}}:=\mathbb{R}\cup\{-\infty,+\infty\}$, the corresponding
type-I Bregman UMVUE is
\begin{align}
\tilde{\delta}(T)
&=\log\frac{T}{n-T} = \log \frac{\sum_{i=1}^{n}X_{i}}{n-\sum_{i=1}^{n}X_{i}},
\end{align}
where $\tilde{\delta}(0)=-\infty$ and
$\tilde{\delta}(n)=+\infty$ by convention.
In contrast, the natural parameter $\theta=\log p/(1-p)$ has no classical unbiased estimator for finite $n$.
Indeed, a function of $p$ is unbiasedly estimable from $n$ Bernoulli trials
only if it is a polynomial in $p$ of degree at most $n$
\cite[Chapter~2, Example~3.1]{LC}. Thus, a classical UMVUE of $\theta$ does not exist.
\end{eg}

\section{Conclusion} \label{sec:conclusion}
In this paper, we developed a theory of UMVUEs under Bregman losses, which we termed \emph{Bregman UMVUEs}. 
By exploiting bias--variance-type decompositions for Bregman divergences, we introduced two notions of unbiasedness, referred to as type-I and type-II Bregman unbiasedness, corresponding to the two natural loss functions $D_{\varphi}(\theta,\hat{\theta})$ and $D_{\varphi}(\hat{\theta},\theta)$.
We showed that the type-II formulation reduces to the classical setting: the induced notion of unbiasedness coincides with the standard definition. 
Meanwhile, the type-I formulation results in a fundamentally different framework, in which unbiasedness is characterized through the dual space induced by the gradient map $\nabla\varphi$. 
This induces a new class of estimators that, in general, differs from the classical UMVUE.
For the type-I formulation, we established analogs of the Rao--Blackwell and Lehmann--Scheff{\'e} theorems by leveraging the dual space representation associated with Bregman divergences. 
These results provide a systematic method for constructing type-I Bregman UMVUEs from sufficient and complete sufficient statistics. 
Several examples demonstrated that the resulting estimators can differ substantially from their classical counterparts.

\section*{Acknowledgment}
\textit{Disclosure of AI usage:}
The proofs of Theorems 1 and 2 were developed with assistance from OpenAI's ChatGPT-5.4 through iterative dialogue on proof strategies and intermediate arguments. The authors independently verified the resulting proofs and take full responsibility for their correctness.


\end{document}